\long\def\LongVersion#1\LongVersionEnd{#1}
\long\def\ShortVersion#1\ShortVersionEnd{}
\long\def\ignoreabs#1\ignoreabstractEnd{}
\newcommand{\Comment}[1]{}
\newtheorem{theorem}{Theorem}[section]
\newtheorem{lemma}[theorem]{Lemma}
\newtheorem{proposition}[theorem]{Proposition}
\newtheorem*{observation*}{Observation}
\theoremstyle{definition}
\newtheorem*{definition*}{Definition}
\theoremstyle{plain}
\newcommand{\Vertices}[0]{\mathit{V}}
\newcommand{\Edges}[0]{\mathit{E}}
\newcommand{\Peers}[0]{\mathcal{P}}
\newcommand{\ERDC}[0]{\mathrm{ERDC}}
\newcommand{\PDDC}[0]{\mathrm{PDDC}}
\newcommand{\SPDDC}[0]{\mathrm{SPDDC}}
\newcommand{\FDC}[0]{\mathrm{FDC}}
\newcommand{\CalS}[0]{\mathcal{S}}
\newcounter{smallitemizec}
\newenvironment{smallitemize}
{   \setcounter{smallitemizec}{0}
    \vspace{-0.5ex}
  \begin{list}{$\bullet$}
    {\usecounter{smallitemizec}
      \setlength{\parsep}{0pt}
      \setlength{\itemsep}{0pt}}
    }{ \end{list}
   \vspace{-0.5ex}
}
\begin{document}

\LongVersion 
\author{
Yuval Emek
\thanks{Computer Engineering and Networks Laboratory, ETH Zurich, Zurich,
Switzerland.
E-mail: {\tt yuval.emek@tik.ee.ethz.ch}.}
\and
Pierre Fraigniaud
\thanks{CNRS and University of Paris Diderot, France.
E-mail: {\tt pierref@liafa.jussieu.fr}.}
\and
Amos Korman
\thanks{CNRS and University of Paris Diderot, France.
E-mail: {\tt amos.korman@liafa.jussieu.fr}.}
\and
Shay Kutten
\thanks{Information Systems Group, Faculty of IE\&M, The Technion,
Haifa, 32000 Israel. E-mail: {\tt kutten@ie.technion.ac.il}.}
\and
David Peleg
\thanks{
Department of Computer Science and Applied Mathematics, The Weizmann
Institute of Science, Rehovot, 76100 Israel.
E-mail: {\tt david.peleg@weizmann.ac.il}.}
}
\LongVersionEnd 

\title{Notions of Connectivity in Overlay Networks
\thanks{Supported by a France-Israel cooperation grant
(``Mutli-Computing'' project)
from the France Ministry of Science and Israel Ministry of Science. Supported by the ANR projects DISPLEXITY and PROSE, and by the INRIA project GANG}
}


\maketitle

\begin{abstract}
``How well connected is the network?''
This is one of the most fundamental questions one would ask when facing
the challenge of designing a communication network.
Three major notions of connectivity have been considered in the
literature, but in the context of traditional (single-layer) networks, they
turn out to be equivalent.
This paper introduces a model for studying the three notions of connectivity
in  {\em multi-layer} networks.
Using this model, it is easy to demonstrate that in multi-layer networks the
three notions may differ dramatically.
Unfortunately, in contrast to the single-layer case, where the values of the
three connectivity notions can be computed efficiently, it has been recently
shown in the context of WDM networks (results that can be easily translated to
our model) that the values of two of these notions of connectivity are hard to
compute or even approximate in multi-layer networks.
The current paper shed some positive light into the multi-layer connectivity
topic:
we show that the value of the third connectivity notion can be computed in
polynomial time and develop an approximation for the construction of well
connected overlay networks.
\end{abstract}


\section{Introduction}
\subsection{Background and motivation}

The term ``connectivity'' in networks has more than one meaning,
but these meanings are equivalent in ``traditional'' networks.
While the graph theoretic definition of connectivity refers to the ability to
reach every node from every other node (a.k.a.\ \emph{$1$-connectivity}), the
term connectivity is often related also to the \emph{survivability} of a
network, namely, the ability to preserve $1$-connectivity whenever some links
fail.\footnote{
The current paper does not deal with {\em node} connectivity.
}
In other words, a network $G$ is said to be $k$-{\em connected} if it
satisfies the following ``connectivity property'' (CP):
\begin{description}
\item[(CP1)] $G$ remains connected whenever up to $k-1$ links are erased from
it.
\end{description}

However, there are also other meanings to connectivity.
A network $G$ is also said to be $k$-{\em connected} if it
satisfies the following  ``connectivity property'':
\begin{description}
\item[(CP2)] There exist $k$ pairwise {\em link-disjoint paths} from $s$ to
$t$ for every two nodes $s, t$ in $G$.
\end{description}
The equivalence of these two properties \cite{Meng27} enables numerous
practical applications.
For example, one of the applications of the existence of link-disjoint paths
(Property (CP2)) is to ensure survivability (Property (CP1)).
Often, a backup (link-disjoint) path is prepared in advance, and the traffic
is diverted to the backup path whenever a link on the primary path fails.
An example is the backup path protection mechanism in SONET networks (see,
e.g., \cite{protection}).

A third property capturing connectivity is based on the amount of flow that
can be shipped in the network between any source and any destination, defining
the capacity of a single link to be~$1$.
In other words, a network $G$ is said to be $k$-{\em connected} if it
satisfies the following  ``connectivity property'':
\begin{description}
\item[(CP3)] It is possible to ship $k$ units of flow from $s$ to $t$ for
every two nodes $s, t$ in $G$.
\end{description}
This property too is equivalent to the first two \cite{FF56}, and is also used
together with them.
For example, routing some flow of information around congestion (which may be
possible only if the network satisfies property (CP3) and thus can support
this additional flow) uses the second property, i.e., it relies on the
existence of multiple link-disjoint paths.

Current networks, however, offer multi-layered structures which yield
significant complications when dealing with the notion of connectivity.
In particular, the overlay/underlying network dichotomy plays a major role
in modeling communication networks, and overlay networks such as peer-to-peer
(P2P) networks, MPLS networks, IP/WDM networks, and SDH/SONET-WDM networks,
all share the same overall structure:
an \emph{overlay} network $H$ is implemented on top of an \emph{underlying}
network $G$.
This implementation can be abstracted as a \emph{routing scheme} that maps
overlay connections to underlying routes.
We comment that there are sometimes differences between such a mapping and the common notion of a routing scheme. Still, since the routing scheme often defines the mapping,
we shall term this mapping the
\emph{routing scheme}.

Often, the underlying network itself is implemented on top of yet another
network, thus introducing a multi-layer hierarchy.
Typically, the lower level underlying network is ``closer'' to the physical
wires, whereas the higher level network is a traffic network in which edges
capture various kinds of connections, depending on the context.
For the sake of simplicity, we focus on a pair of consecutive layers $G$ and
$H$.
This is sufficient to capture a large class of practical scenarios.

The current paper deals with what happens to the different connectivity
properties once we turn to the context of overlay networks.
As discussed later on, connectivity has been studied previously in the
``overlay network'' world under the ``survivability'' interpretation (CP1),
and it has been observed that, in this context, the connectivity parameter
changes, i.e., the connectivity of the overlay network may be different
from that of the underlying network.
Lee et al.\ \cite{prev} demonstrated the significance of this difference by
showing that the survivability property is computationally hard and even hard
to approximate in the multi-layer case.
Since the three aforementioned connectivity parameters may differ in
multi-layer networks (see Section~\ref{section:Model}), they also showed a
similar result for the disjoint paths connectivity property.

Interestingly, the motivation of Lee et al.\ for addressing the disjoint paths
connectivity property was the issue of flow.
One of the contributions of the current paper is to directly address this
issue, showing that in contrast to the previous two notions of connectivity,
the maximum flow supported by an overlay network can actually be computed in
polynomial time.

In the specific context of survivability, there has been other papers that
have shown  that the issue of connectivity in an overlay network is different
from that of connectivity in underlying networking.
Consider, for example,
a situation where several overlay edges (representing connections)
of $H$ pass over the same physical link.
Then all these overlay edges may be disconnected as a result of a
single hardware fault in that link, possibly disconnecting the overlay network.
The affected overlay links are said to {\em share} the {\em risk}
of the failure of the underlying physical link, hence they are referred to
in the literature as a \emph{shared risk link group} (SRLG).
An SRLG-based model for overlay networks was extensively studied in recent
years;\footnote{
Note that a common underlying link is not the only possible shared risk;
overlay links sharing a node may form a shared risk link group too.
}
see, e.g., \cite{intro-srlg} for a useful introduction to this notion and
\cite{cisco-srlg} for a discussion of this concept in the context of MPLS.
The SRLG model hides the actual structure of the underlying network, in the
sense that many different underlying networks can yield the same sets of
SRLG.
(For certain purposes, this is an advantage of the model.)
An even more general notion is that of \emph{Shared Risk Resource Group}
However, sometimes this model abstracts away too much information,
making certain computational goals (such as, e.g., flow computations)
harder to achieve.
%

In contrast to the SRLG model, we present the alternative model of
\emph{deep connectivity}, which allows us to simultaneously consider all three
components: the overlay network, the underlying network, and the mapping (the
routing scheme).
Note that all three should be considered:
For example, if the underlying network is not connected, then neither can the
overlay network be.
The routing scheme also affects the connectivity properties as different
routing schemes may yield significantly different overlay link dependencies.
In some cases, routing is constrained to shortest paths, whereas in other
cases it can be very different.
In \emph{policy based} routing schemes (see, e.g. \cite{CiscoPolicy}), for
example, some underlying edges are not allowed to be used for routing from $u$
to $v$, which may cause the underlying path implementing the overlay link $(u,
v)$ to be much longer than the shortest $(u, v)$-path.

\subsection{The deep connectivity model}
\label{section:Model}

The underlying network is modeled by a (simple, undirected, connected)
graph $G$ whose vertex set $\Vertices(G)$ represents the network nodes,
and whose edge set $\Edges(G)$ represents the communication links between them.
Some nodes of the underlying network are designated as \emph{peers};
the set of peers is denoted by $\Peers \subseteq \Vertices(G)$.
The overlay network, modeled by a graph $H$, spans the peers, i.e.,
$\Vertices(H) = \Peers$ and $\Edges(H) \subseteq \Peers \times \Peers$;
$H$ typically represents a ``virtual'' network, constructed on top of
the peers in the underlying communication network $G$.

An edge $(u, v)$ in the overlay graph $H$ may not directly correspond to an
edge in the underlying graph $G$ (that is, $\Edges(H)$ is not necessarily a
subset of $\Edges(G)$).
Therefore, communication over a $(u, v)$ edge in $H$ should
often be routed along some multi-hop path connecting $u$ and $v$ in $G$.
This is the role of a \emph{routing scheme} $\rho : \Peers \times \Peers
\rightarrow 2^{\Edges(G)}$ that maps each pair $(u, v)$ of peers to some
simple path $\rho(u, v)$ connecting $u$ and $v$ in $G$.
A message transmitted over the edge $(u, v)$ in $H$ is physically disseminated
along the path $\rho(u, v)$ in $G$.
We then say that $(u, v)$ is \emph{implemented} by $\rho(u, v)$.
For the sake of simplicity, the routing scheme $\rho$ is assumed to be
symmetric, i.e., $\rho(u, v) = \rho(v, u)$.
More involved cases do exist in reality:
the routing scheme may be asymmetric, or may map some overlay edge into
multiple routes;
the simple model given here suffices to show interesting differences between
the various connectivity measures.

When a message is routed in $H$ from a peer $s \in \Peers$ to a
non-neighboring peer $t \in \Peers$ along some multi-hop path $\pi = (x_0, x_1,
\dots, x_k)$ with $x_0 = s$, $x_k = t$, and $(x_i, x_{i + 1}) \in E(H)$, it is
physically routed in $G$ along the concatenated path $\rho(x_0, x_1) \rho(x_1,
x_2) \cdots \rho(x_{k - 1}, x_k)$.
In some cases, when the overlay graph $H$ is known, it will be convenient to
define the routing scheme over the edges of $H$, rather than over all peer
pairs.

The notion of \emph{deep connectivity} grasps the
connectivity in the overlay graph $H$, while taking into account its
implementation by the underlying paths in $G$.
Specifically, given two peers $s, t \in \Peers$, we are interested in three
different parameters, each capturing a specific type of connectivity.
In order to define these parameters, we extend the domain of $\rho$ from
vertex pairs in $\Peers \times \Peers$ to collections of such pairs in the
natural manner, defining $\rho(F) = \bigcup_{(u, v) \in F} \rho(u, v)$ for
every $F \subseteq \Peers \times \Peers$.
In particular, given an $(s, t)$-path $\pi$ in $H$, $\rho(\pi) = \bigcup_{e
\in \pi} \rho(e)$ is the set of underlying edges used in the implementation of
the overlay edges along $\pi$.
\begin{itemize}

\item
The \emph{edge-removal deep connectivity} of $s$ and $t$ in $H$ with respect
to $G$ and $\rho$, denoted by $\ERDC_{G, \rho}(s, t, H)$, is defined as the size
of the smallest subset $F \subseteq \Edges(G)$ that intersects with
$\rho(\pi)$ for every $(s, t)$-path $\pi$ in $H$;
namely, the minimum number of underlying edges that should be removed in order
to disconnect $s$ from $t$ in the overlay graph.

\item
The \emph{path-disjoint deep connectivity} of $s$ and $t$ in $H$ with
respect to $G$ and $\rho$, denoted by $\PDDC_{G, \rho}(s, t, H)$, is defined as
the size of the largest collection $C$ of $(s, t)$-paths in $H$ such that
$\rho(\pi) \cap \rho(\pi') = \emptyset$ for every $\pi, \pi' \in C$ with $\pi
\neq \pi'$, i.e., the maximum number of overlay paths connecting $s$ to $t$
whose underlying implementations are totally independent of each other.

\item
The \emph{flow deep connectivity} of $s$ and $t$ in $H$ with respect to $G$
and $\rho$, denoted by $\FDC_{G, \rho}(s, t, H)$, is defined as the maximum
amount of flow\footnote{
In the setting of undirected graphs, flow may be interpreted in two different
ways depending on whether two flows along the same edge in opposite directions
cancel each other or add up.
Here, we assume the latter interpretation which seems to be more natural in
the context of overlay networks.
} that can be pushed from $s$ to $t$ in $G$ restricted to the
images under $\rho$ of the $(s, t)$-paths in $H$, assuming that each edge in
$\Edges(G)$ has a unit capacity.
Intuitively, if $s$ and $t$ are well connected, then it should be possible to
push a large amount of flow between them.

\end{itemize}

\par
\noindent {\bf Example:}
To illustrate the various definitions,
consider the underlying network $G$ depicted in Fig. \ref{fig:conn-defs}(a),
and the overlay network $H$ depicted in Fig. \ref{fig:conn-defs}(b).
The routing scheme $\rho$ assigns each of the 6 overlay edges adjacent to
the extreme $S$ and $T$ a simple route consisting of the edge itself.
For the remaining three overlay edges, the assigned routes are as follows:
\begin{eqnarray*}
\rho(U_1,U_4) & = & (U_1,M_1,M_2,U_2,U_3,U_4)\\
\rho(M_1,M_4) & = & (M_1,M_2,D_2,D_3,M_3,M_4)\\
\rho(D_1,D_4) & = & (D_1,D_2,D_3,U_3,U_4,D_4)~.
\end{eqnarray*}
The route $\rho(M_1,M_4)$ is illustrated by the dashed line in
Fig. \ref{fig:conn-defs}(a).
Note that in the original (underlying) network $G$, the connectivity of the
extreme nodes $S$ and $T$ is 3 under all three definitions. In contrast,
the values of the three connectivity parameters for the extreme nodes
$S$ and $T$ in the overlay network $H$ under the routing scheme $\rho$
are as follows:

\begin{itemize}
\item
The edge-removal deep connectivity of $s$ and $t$ in $H$
w.r.t. $G$ and $\rho$ is $\ERDC_{G, \rho}(s, t, H)=2$.
\\
Indeed, disconnecting the underlying edge $(D_2,D_3)$
plus any edge of the upper underlying route will disconnect $S$ from $T$.
\item
The path-disjoint deep connectivity of $s$ and $t$ in $H$
w.r.t. $G$ and $\rho$ is $\PDDC_{G, \rho}(s, t, H)=1$.
\\
Indeed, any two of the three overlay routes connecting $S$ and $T$
share an underlying edge.
\item
The \emph{flow deep connectivity} of $s$ and $t$ in $H$
w.r.t. $G$ and $\rho$ is $\FDC_{G, \rho}(s, t, H) = 3/2$.
\\
This is obtained by pushing 1/2 flow unit through each of the three
overlay routes.
\end{itemize}

\begin{figure}
\begin{center}
\includegraphics[width=0.95\linewidth]{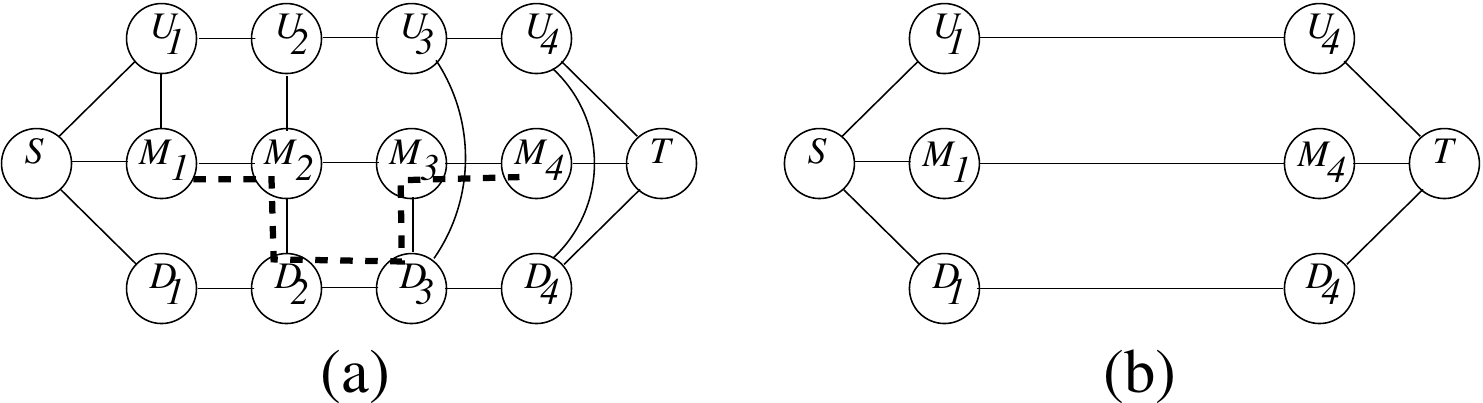}
\end{center}
\caption{\label{fig:conn-defs}
(a) The underlying graph $G$.
(b) The overlay network $H$.}
\vspace*{-.5cm}
\end{figure}

For each deep connectivity $(s, t)$-parameter $X_{G,
\rho}(s,t,H)$, we define the corresponding
\emph{all-pairs} variant $X_{G, \rho}(H) = \min_{s, t \in \Peers} X_{G,
\rho}(s, t, H)$.
When $G$ and $\rho$ are clear from the context, we may remove them from the
corresponding subscripts.

\subsection{Our contributions}

Our model for overlay networks makes it convenient to explore the
discrepancies between the different deep connectivity notions.
Classical results from graph theory, e.g., the fundamental min-cut max-flow
theorem \cite{FF56,EFS56} and Menger's theorem \cite{Meng27} state that the
three connectivity parameters mentioned above are equivalent when a single
layer network is considered.
Polynomial time algorithms that compute these parameters (in a single
layer network) were discovered early \cite{FF56,Dini70,EK72} and have since
become a staple of algorithms textbooks \cite{CLRS09,KT05}.
As mentioned above, previous results \cite{prev,CDPRV07}, when translated to
our model, have shown that in multi-layer networks, two of the three
connectivity parameters are computationally hard and even hard to
approximate.
Our first technical contribution is to expand on these negative
results by showing that the the path-disjoint deep connectivity property
cannot be approximated to any finite ratio when attention is restricted to
simple paths in the underlying graph.

On the positive side, we show that the flow deep connectivity parameters can
be computed in polynomial time (Section~\ref{section:EfficientAlgorithmFDC}),
thus addressing the motivation of \cite{prev} for studying the disjoint paths
property in overlay networks.
Then, we address the issue of constructing a ``good'' overlay graph for a
given underlying graph and routing scheme.
As opposed to the difficulty of approximating the value of the parameters, we
show that the related construction problem can sometimes be well
approximated.
Specifically, in Section~\ref{section:SparseOverlayGraphs}, we investigate the
problem of constructing $2$-edge removal deeply connected overlay graphs with
as few as possible (overlay) edges.
This problem is shown to be NP-hard, but we show that a
logarithmic-approximation for it can be obtained in polynomial-time.
We also devise a 2-approximation algorithm for particular, yet natural,
instances of the problem.

\section{Hardness of approximation}
\label{section:HardnessApproximation}

As mentioned earlier, Lee et al.~\cite{prev} established hardness of
approximation results for the problems of computing the parameters $\ERDC_{G,
\rho}(s, t, H)$, $\PDDC_{G, \rho}(s, t, H)$, and $\ERDC_{G, \rho}(H)$.
For completeness, we observe that the all-pairs variant $\PDDC_{G, \rho}(H)$
of the path-disjoint deep connectivity parameter is also hard to approximate,
establishing the following theorem, which is essentially a corollary of
Theorem~4 in \cite{prev} combined with a result of H{\aa}stad~\cite{Hast99}.

\begin{theorem} \label{theorem:HardnessAllPairsPDDC}
Unless NP = ZPP, the problem of computing the parameter $\PDDC_{G, \rho}(H)$
cannot be approximated to within a ratio of $|\Edges(H)|^{1 / 2 - \epsilon}$
for any $\epsilon > 0$.
\end{theorem}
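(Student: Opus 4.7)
The plan is to derive the all-pairs bound by combining two external ingredients with a small padding argument. The first ingredient is H{\aa}stad's theorem~\cite{Hast99} stating that, unless NP $=$ ZPP, \textsc{Max Independent Set} on $n$ vertices cannot be approximated to within $n^{1-\epsilon}$. The second is Theorem~4 of~\cite{prev}, whose reduction encodes a \textsc{Max IS} instance on $n$ vertices as a deep-connectivity instance $(G,\rho,H)$ with a designated pair $(s,t) \in \Peers \times \Peers$ such that $\PDDC_{G,\rho}(s,t,H)$ is tightly related to the independence number, while $|\Edges(H)|$ grows like a fixed polynomial (roughly $n^{2}$) in $n$. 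Composed, these two results already imply a $|\Edges(H)|^{1/2-\epsilon}$ hardness threshold for the single-pair parameter $\PDDC_{G,\rho}(s,t,H)$; the remaining step is to transfer this threshold to the all-pairs minimum $\PDDC_{G,\rho}(H)$.

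To carry out the lift, I would modify $(G,\rho,H)$ into a new instance $(G',\rho',H')$ on the same peer set, ensuring that for every pair $(u,v) \neq (s,t)$ of peers the deep path-disjoint connectivity $\PDDC_{G',\rho'}(u,v,H')$ is at least the YES-case value of $\PDDC_{G,\rho}(s,t,H)$, so that the all-pairs minimum is always attained at $(s,t)$. Concretely, for every such pair I would add to $G'$ a bundle of $K$ pairwise edge-disjoint ``private'' underlying $(u,v)$-paths built from fresh vertices and edges, add to $H'$ either a direct overlay edge $(u,v)$ or (if $H$ is required to stay simple and one overlay edge admits only one underlying route) a short chain of fresh dummy peers giving $K$ internally disjoint two-hop overlay paths from $u$ to $v$, each with its own fresh underlying implementation. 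The parameter $K$ is set to exceed the maximum possible value of $\PDDC_{G,\rho}(s,t,H)$ across all instances of the reduction, which is polynomial in $n$. Since every bundle uses private vertices and edges in $G'$, all these new $(u,v)$-paths are $\rho'$-disjoint from each other and from the original structure, guaranteeing $\PDDC_{G',\rho'}(u,v,H') \geq K$.

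The main obstacle, and the step that most requires care, is to avoid creating a new ``bottleneck'' pair among the freshly introduced dummy peers; this is resolved by giving every dummy peer a symmetric role in its bundle and by including symmetric bundles between every pair involving a dummy, so that each dummy also has at least $K$ fully private overlay paths to every other peer. A routine accounting then shows $|\Edges(H')| = \Poly(|\Edges(H)|)$, so the gap of ratio $\Omega(|\Edges(H)|^{1/2-\epsilon})$ surviving in the single-pair instance transfers to a gap of ratio $\Omega(|\Edges(H')|^{1/2-\epsilon'})$ in the all-pairs instance for a suitably smaller $\epsilon'>0$. I would also note (and this is why the theorem is stated as ``essentially a corollary'') that if one inspects the construction of~\cite{prev} directly, the peers other than $s,t$ may already be well-connected enough that the padding step becomes either trivial or entirely unnecessary, in which case the all-pairs bound follows immediately from Theorem~4 of~\cite{prev} and~\cite{Hast99}.
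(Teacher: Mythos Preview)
Your proposal is correct and rests on exactly the two ingredients the paper invokes: H{\aa}stad's $n^{1-\epsilon}$ inapproximability for \textsc{Max Independent Set} and Theorem~4 of~\cite{prev}. The paper itself supplies no proof beyond declaring the theorem ``essentially a corollary'' of those two results, so you have in fact written strictly more than the paper does.

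The one place where your route diverges from the paper's (implicit) intent is the padding step. Rather than augmenting the instance with private bundles to force $(s,t)$ to be the minimizing pair, the paper's intended argument---visible in how the appendix handles the analogous all-pairs statement for $\SPDDC$---is simply to inspect the construction of~\cite{prev} and observe that $\PDDC_{G,\rho}(x,y,H)$ is already minimized at $(x,y)=(s,t)$, so the single-pair lower bound \emph{is} the all-pairs lower bound with no modification. You anticipate exactly this in your closing remark, so your writeup would benefit from promoting that observation to the main line of argument and relegating the padding construction to a fallback. The padding route is sound but carries some bookkeeping overhead (handling the dummy peers without creating new bottlenecks, and controlling the blow-up in $|\Edges(H')|$ so that only an $\epsilon' < \epsilon$ is lost), whereas the direct-inspection route costs nothing.
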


We now turn to show that the following natural variants of the $\PDDC$
parameters cannot be approximated to within any finite ratio.
Let $\SPDDC_{G, \rho}(s, t, H)$ denote the size of the largest collection $C$
of $(s, t)$-paths in $H$ such that all paths $\pi \in C$ are implemented by
simple paths $\rho(\pi)$ in $G$ and $\rho(\pi) \cap \rho(\pi') = \emptyset$
for every $\pi, \pi' \in C$ with $\pi \neq \pi'$;
let $\SPDDC_{G, \rho}(H) = \min_{s, t \in \Peers} \SPDDC_{G, \rho}(s, t, H)$.
Note that these parameters are merely a restriction of the $\PDDC$ parameters
to simple paths (hence the name, which stands for \emph{simple} path-disjoint
deep connectivity).

The inapproximability of the $\SPDDC_{G, \rho}(s, t, H)$ parameter is proved
by reducing the set packing problem to the problem of distinguishing between
the case $\SPDDC_{G, \rho}(s, t, H) = 0$ and the case $\SPDDC_{G, \rho}(s, t,
H) \geq 1$.
In fact, the vertices $s, t \in \Vertices(H)$ that minimize $\SPDDC_{G,
\rho}(s, t, H)$ in this reduction are known in advance, thus establishing the
impossibility of approximating the all-pairs parameter $\SPDDC_{G, \rho}(H)$
as well.
The proofs of the following two theorems are deferred to
Appendix~\ref{appendix:Hardness}.

\begin{theorem} \label{theorem:HardnessSPDDC}
Unless P = NP, the problem of computing the parameter $\SPDDC_{G, \rho}(s,
t, H)$ cannot be approximated to within any finite ratio.
\end{theorem}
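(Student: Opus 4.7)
The plan is to give a polynomial-time reduction from the NP-hard \textsc{Set Packing} problem (given subsets $S_1, \ldots, S_m$ of a universe $U$ and an integer $k$, decide whether $k$ of the sets are pairwise disjoint) to the decision problem of distinguishing $\SPDDC_{G, \rho}(s, t, H) = 0$ from $\SPDDC_{G, \rho}(s, t, H) \geq 1$. Any finite multiplicative approximation of $\SPDDC$ separates these two cases, so this gap-producing reduction yields the theorem. The construction will be tailored so that simple $(s,t)$-paths in $H$ are in bijection with tuples of sets, and the simplicity of their $\rho$-images in $G$ is in bijection with the disjointness of the chosen sets.

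The overlay $H$ is built in layers. Introduce layer-boundary peers $c_0 = s, c_1, \ldots, c_{k-1}, c_k = t$, and for every layer $i \in [k]$ and index $j \in [m]$ a ``slot'' peer $x_{i,j}$; the $H$-edges are $(c_{i-1}, x_{i,j})$ and $(x_{i,j}, c_i)$ for all $i,j$. Since slots are adjacent only to the two boundaries of their own layer, every simple $(s,t)$-path of $H$ is forced into the form $s, x_{1,j_1}, c_1, x_{2,j_2}, c_2, \ldots, c_{k-1}, x_{k,j_k}, t$ and is therefore encoded by a tuple $(j_1, \ldots, j_k) \in [m]^k$. The underlying graph $G$ contains these peers together with one shared ``element'' vertex $u_e$ per $e \in U$, and, for each $(i, j)$ pair, a collection of \emph{private} dummy vertices. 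The routing scheme sets $\rho(c_{i-1}, x_{i,j})$ to the direct edge $(c_{i-1}, x_{i,j})$ in $G$, and $\rho(x_{i,j}, c_i)$ to the simple path that starts at $x_{i,j}$, visits the element vertices $\{u_e : e \in S_j\}$ in a fixed order interleaved with fresh private dummies (the dummies keep $G$ simple even when two different $(i,j)$ and $(i',j')$ would otherwise need the same $(u_e, u_{e'})$ edge), and ends at $c_i$.

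Concatenating the $\rho$-images along the $(s,t)$-path in $H$ encoded by $(j_1, \ldots, j_k)$ produces an underlying walk whose only potentially repeated vertices are the universe vertices $u_e$: each $c_i$ appears exactly once at its junction, and every slot and every dummy is private to a single $(i,j)$ and appears at most once. Hence the walk is a simple path in $G$ if and only if the element sets $S_{j_1}, \ldots, S_{j_k}$ are pairwise disjoint, which is precisely the set packing condition. This yields $\SPDDC_{G, \rho}(s, t, H) \geq 1$ in the YES case and $\SPDDC_{G, \rho}(s, t, H) = 0$ in the NO case, which proves inapproximability within any finite ratio unless P $=$ NP. The all-pairs statement in the next theorem follows because $(s,t)$ is fixed in the reduction; the main obstacle, and the one requiring care, is to ensure that no other peer pair $(s', t')$ spuriously lowers the all-pairs value to $0$. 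I would handle this by padding the set packing instance with enough singleton sets (which never hurt or help packability), so that for every other peer pair one can exhibit a single $H$-path whose $\rho$-image is simple, guaranteeing $\SPDDC_{G, \rho}(s', t', H) \geq 1$ for all $(s', t') \neq (s, t)$ and leaving the original gap intact.
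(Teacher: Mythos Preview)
Your proposal is correct and follows essentially the same route as the paper: both reduce \textsc{Set Packing} to the $0$-versus-$\geq 1$ gap for $\SPDDC_{G,\rho}(s,t,H)$ via a $k$-layered overlay graph in which each simple $(s,t)$-path encodes a $k$-tuple of sets and simplicity of the concatenated underlying image is equivalent to pairwise disjointness. The only cosmetic differences are that the paper packages the encoding into a separate lemma and detects shared elements via a common underlying \emph{edge} $(v_i^a,v_i^b)$, while you detect them via a common underlying \emph{vertex} $u_e$ flanked by private dummies; one small point you should make explicit is that the sets may be assumed nonempty (otherwise picking the same empty set in every layer would yield a simple image even on NO instances).
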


\begin{theorem} \label{theorem:HardnessAllPairsSPDDC}
Unless P = NP, the problem of computing the parameter $\SPDDC_{G, \rho}(H)$
cannot be approximated to within any finite ratio.
\end{theorem}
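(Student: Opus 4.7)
The plan is to obtain Theorem~\ref{theorem:HardnessAllPairsSPDDC} as a direct strengthening of Theorem~\ref{theorem:HardnessSPDDC} by observing that the reduction used for the two-vertex case already pins down the minimizing pair. Concretely, given an instance of set packing, the reduction from Theorem~\ref{theorem:HardnessSPDDC} produces a triple $(G, H, \rho)$ together with a designated pair of peers $s^{*}, t^{*} \in \Peers$ such that distinguishing $\SPDDC_{G, \rho}(s^{*}, t^{*}, H) = 0$ from $\SPDDC_{G, \rho}(s^{*}, t^{*}, H) \geq 1$ is NP-hard. To lift this to the all-pairs parameter, I would argue that in the \emph{No} instances one automatically has $\SPDDC_{G, \rho}(H) \leq \SPDDC_{G, \rho}(s^{*}, t^{*}, H) = 0$, so it suffices to ensure that in the \emph{Yes} instances every other pair $(s', t') \in \Peers \times \Peers$ also satisfies $\SPDDC_{G, \rho}(s', t', H) \geq 1$.

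For the \emph{Yes} direction I would augment the construction from Theorem~\ref{theorem:HardnessSPDDC} with an auxiliary gadget that trivially guarantees simple-path-disjoint connectivity for all pairs different from $(s^{*}, t^{*})$. The cleanest way to do this is to add a fresh ``spine'' in both layers: introduce, for every peer $p \in \Peers \setminus \{s^{*}, t^{*}\}$, a private edge in $G$ linking $p$ to a new node and, in parallel, an overlay edge in $H$ from $p$ to $s^{*}$ (and symmetrically to $t^{*}$) whose image under $\rho$ is exactly that private underlying edge. Because each such overlay edge is routed over a dedicated underlying edge that is used by no other overlay edge and is itself a simple one-edge path, any pair $(s', t') \neq (s^{*}, t^{*})$ gets at least one $(s', t')$-path in $H$ whose $\rho$-image is a simple path in $G$. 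Hence $\SPDDC_{G, \rho}(s', t', H) \geq 1$, so the all-pairs minimum is attained at $(s^{*}, t^{*})$ and the gap between $0$ and $\geq 1$ is preserved.

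The main steps are therefore: (i) invoke the reduction of Theorem~\ref{theorem:HardnessSPDDC} to obtain $(G, H, \rho, s^{*}, t^{*})$ with the required $0$ vs.\ $\geq 1$ gap; (ii) augment by the spine gadget described above to produce $(G', H', \rho')$; (iii) verify that $\SPDDC_{G', \rho'}(s^{*}, t^{*}, H') = \SPDDC_{G, \rho}(s^{*}, t^{*}, H)$, since the new overlay edges introduce no new $(s^{*}, t^{*})$-paths in $H$ that are not already covered by the original routing (the gadget routes only touch the private underlying edges and are thus disjoint from $\rho(\Edges(H))$); and (iv) conclude that $\SPDDC_{G', \rho'}(H') = 0$ in No instances and $\SPDDC_{G', \rho'}(H') \geq 1$ in Yes instances, which rules out any finite-ratio approximation unless P~$=$~NP.

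The main obstacle is step (iii): one must check that the spine gadget is \emph{inert} with respect to the original gap, i.e., that it neither creates a simple-path implementation that previously did not exist nor destroys the non-existence of one in the No case. This is where I would be most careful: I would insist that each spine edge in $G'$ be fresh (not shared with $E(G)$), that its endpoints outside $\Peers$ be of degree one, and that $\rho'$ map each spine overlay edge to the single new underlying edge. This keeps the underlying implementations of the original overlay edges untouched, so every candidate $(s^{*}, t^{*})$-path in $H'$ that uses a spine edge must pass through a degree-one node and is therefore not an $(s^{*}, t^{*})$-path at all, reducing the analysis at $(s^{*}, t^{*})$ to the one already performed for Theorem~\ref{theorem:HardnessSPDDC}.
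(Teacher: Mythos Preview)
Your spine gadget does not work as described, and the defect is essential rather than cosmetic. By definition, a routing scheme must map the overlay edge $(p, s^{*})$ to a simple $p$--$s^{*}$ path in the underlying graph. Sending it instead to a single fresh edge $(p, z_p)$ with $z_p$ a new degree-one node is not a legal assignment, since $z_p \neq s^{*}$. Your inertness argument in step~(iii) leans precisely on this illegality: you want the image of any $(s^{*}, t^{*})$-path through a spine edge to dead-end at $z_p$ and thus fail to be simple. But the very same feature destroys step~(ii): for two ordinary peers $p, q$, the image of the overlay path $p \to s^{*} \to q$ is the pair of disjoint pendant edges $(p, z_p)$ and $(q, z_q)$, which is not a $p$--$q$ path in $G'$ at all, so you have not produced a simple implementation for the pair $(p,q)$ either. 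If you repair the routing in the natural way---add genuine underlying edges $(p, s^{*})$ and $(p, t^{*})$ and route the spine overlay edges along them---then the two-hop overlay path $s^{*} \to p \to t^{*}$ has as its image the simple underlying path $s^{*}\,p\,t^{*}$, so $\SPDDC_{G', \rho'}(s^{*}, t^{*}, H') \geq 1$ unconditionally and the No case collapses. In short, any spine that is connected enough to help the other pairs also creates a shortcut for $(s^{*}, t^{*})$.

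The paper sidesteps all of this: it argues that in the layered overlay graph $H$ built for Theorem~\ref{theorem:HardnessSPDDC}, the pair $(s,t)$ already minimizes $\SPDDC_{G,\rho}(x,y,H)$, so no augmentation is needed. The reason is structural: an $(s,t)$-path is forced to use one ``set-encoding'' $F$-edge at every level $1,\dots,k$, whereas any other pair admits a path that uses $F$-edges at a strict subset of the levels (the remaining hops being the trivial edges $(v_\ell^j,u_\ell)$, each mapped to itself). Hence in a Yes instance the packing that yields a simple implementation for $(s,t)$ immediately yields one for every other pair, while in a No instance $\SPDDC_{G,\rho}(H)\le \SPDDC_{G,\rho}(s,t,H)=0$.
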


\section{Efficient algorithm for $\FDC$}
\label{section:EfficientAlgorithmFDC}

In this section we develop a polynomial time algorithm that computes the
flow deep connectivity parameter $\FDC(s, t, H)$ (which clearly provides
an efficient computation of the parameter $\FDC(H)$ as well).
Consider some underlying graph $G$, routing scheme $\rho$, overlay graph $H$,
and two vertices $s, t \in \Vertices(H)$.
Let $\mathcal{P}$ denote the collection of all simple $(s, t)$-paths in $H$.
For each path $p \in \mathcal{P}$ and for each edge $e \in \Edges(G)$, let
$\psi(p, e)$ be the number of appearances of the edge $e$ along the image of
$p$ under $\rho$.
We begin by representing the parameter $\FDC(s, t, H)$ as the outcome of the
following linear program:
$$
\begin{array}{llr}
\max & \sum_{p \in \mathcal{P}} x_{p} \quad \text{s.t.} & \\
& \sum_{p \in \mathcal{P}} \psi(p, e) \cdot x_{p} \leq 1 & \forall e \in
\Edges(G) \\
& x_{p} \geq 0 & \forall p \in \mathcal{P}
\end{array}
$$

The variable $x_{p}$ represents the amount of flow pushed along the image
under $\rho$ of the path $p$ for every $p \in \mathcal{P}$.
The goal is to maximize the total flow pushed along the images of all paths in
$\mathcal{P}$ subject to the constraints specifying that the sum of flows
pushed through any edge is at most $1$.
This linear program may exhibit an exponential number of variables, so let us
consider its dual program instead:
$$
\begin{array}{llr}
\min & \sum_{e \in \Edges(G)} y_{e} \quad \text{s.t.} & \\
& \sum_{e \in \Edges(G)} \psi(p, e) \cdot y_{e} \geq 1 & \forall p \in
\mathcal{P} \\
& y_{e} \geq 0 & \forall e \in \Edges(G)
\end{array}
$$

The dual program can be interpreted as fractionally choosing as few as
possible edges of $G$ so that the image under $\rho$ of every path $p$ in
$\mathcal{P}$ traverses in total at least one edge.
We cannot solve the dual program directly as it may have an exponential
number of constraints.
Fortunately, it admits an efficient separation oracle, hence it can be solved
in polynomial time (see, e.g., \cite{GLS93}).

Given some real vector $\vec{y}$ indexed by the edges in $\Edges(G)$, our
separation oracle either returns a constraint which is violated by $\vec{y}$
or reports that all the constraints are satisfied and $\vec{y}$ is a feasible
solution.
Recall that a violated constraint corresponds to some path $p \in \mathcal{P}$
such that $\sum_{e \in \Edges(G)} \psi(p, e) \cdot y_{e} < 1$.
Therefore our goal is to design an efficient algorithm that finds such
a path $p \in \mathcal{P}$ if such a path exists.

Let $w(e) = \sum_{e' \in \rho(e)} y_{e'}$ for every edge $e \in \Edges(H)$ and
let $H'$ be the weighted graph obtained by assigning weight $w(e)$ to each
edge $e \in \Edges(H)$.
The key observation in this context is that the (weighted) length of an
$(s, t)$-path $p'$ in $H'$ equals exactly to $\sum_{e \in \Edges(G)} \psi(p,
e) \cdot y_{e}$, where $p$ is the path in $H$ that corresponds to $p'$ in
$H'$.
Therefore, our separation oracle is implemented simply by finding a shortest
$(s, t)$-path $p^{*}$ in $H'$:
if the length of $p^{*}$ is smaller than $1$, then $p^{*}$ corresponds to a
violated constraint;
otherwise, the length of all $(s, t)$-paths in $H'$ is at least $1$, hence
$\vec{y}$ is a feasible solution.
This establishes the following theorem.

\begin{theorem} \label{theorem:EfficientFDC}
The parameters $\FDC_{G, \rho}(s, t, H)$ and $\FDC_{G, \rho}(H)$ can be
computed in polynomial time.
\end{theorem}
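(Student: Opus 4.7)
My plan is to encode $\FDC_{G,\rho}(s,t,H)$ as the value of a linear program whose variables correspond to the (potentially exponentially many) simple $(s,t)$-paths in $H$, then use LP duality together with the ellipsoid method to solve it in polynomial time. Concretely, I would introduce a variable $x_p \geq 0$ for every simple $(s,t)$-path $p$ in $H$, interpreted as the amount of flow routed along the image $\rho(p)$ of that path, maximize $\sum_p x_p$ subject to the capacity constraints $\sum_p \psi(p,e)\, x_p \leq 1$ for each $e \in \Edges(G)$, and argue that an optimal solution of this LP realizes exactly $\FDC_{G,\rho}(s,t,H)$ under the additive interpretation of undirected flow fixed in the model. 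The coefficients $\psi(p,e)$ correctly account for the fact that a single underlying edge may be traversed several times along $\rho(p)$ when the overlay path $p$ visits different edges of $H$ whose underlying routes share $e$.

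Since the primal has exponentially many variables I would work with its dual:
\[
\min \sum_{e \in \Edges(G)} y_e \quad\text{s.t.}\quad \sum_{e \in \Edges(G)} \psi(p,e)\, y_e \geq 1 \;\; \forall p \in \mathcal{P}, \; y_e \geq 0.
\]
The dual has polynomially many variables but exponentially many constraints, so I would invoke the ellipsoid method with a polynomial-time separation oracle. The key observation is that, given a candidate vector $\vec y$, if I define edge weights $w(e) = \sum_{e' \in \rho(e)} y_{e'}$ on the edges of $H$, then for every simple $(s,t)$-path $p$ in $H$ the weighted length of $p$ in this weighted overlay graph equals $\sum_{e \in \Edges(G)} \psi(p,e)\, y_e$. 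Hence it suffices to find a shortest $(s,t)$-path in the weighted overlay $H'$: if its length is strictly less than $1$ it exhibits a violated constraint, and otherwise every dual constraint is satisfied and $\vec y$ is feasible. Shortest path computation is polynomial, so this yields a polynomial-time separation oracle.

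Putting the pieces together, the ellipsoid method solves the dual in polynomial time, and strong LP duality then gives the primal optimum (and, via the standard constraint-generation argument, a polynomial-size optimal primal solution supported on the paths returned by the oracle). Thus $\FDC_{G,\rho}(s,t,H)$ is computable in polynomial time, and $\FDC_{G,\rho}(H) = \min_{s,t \in \Peers} \FDC_{G,\rho}(s,t,H)$ follows by running this procedure $O(|\Peers|^2)$ times.

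The main obstacle I anticipate is the first step: verifying rigorously that the LP value indeed equals $\FDC_{G,\rho}(s,t,H)$. The definition of $\FDC$ speaks of flow in the underlying graph $G$ that is restricted to lie on images of overlay paths, which is not quite the standard edge-flow LP on $G$. I would justify equality by showing (i) any feasible primal solution yields a valid $s$-$t$ flow in $G$ of the same value, by pushing $x_p$ units along $\rho(p)$ for each $p$, respecting unit capacities thanks to the $\psi(p,e)$ coefficients; and (ii) conversely, by a path-decomposition argument, any flow in $G$ supported on images of overlay paths can be decomposed into such path-flows, yielding a feasible primal solution of the same value. Once this correspondence is established, the remaining steps are standard LP and shortest-path manipulations.
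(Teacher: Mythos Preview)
Your proposal is correct and follows essentially the same route as the paper: formulate $\FDC_{G,\rho}(s,t,H)$ as a path-indexed LP, pass to the dual, and solve via the ellipsoid method using a shortest-path separation oracle in $H$ with edge weights $w(e)=\sum_{e'\in\rho(e)} y_{e'}$. The paper simply asserts the LP captures $\FDC$ without the path-decomposition justification you flag, so your treatment is if anything more careful on that point.
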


\section{Sparsest $2$-$\ERDC$ overlay graphs}
\label{section:SparseOverlayGraphs}

In this section we are interested in the following problem, referred to as the
\emph{sparsest $2$-$\ERDC$ overlay graph} problem:
given an underlying graph $G$, a peer set $\Peers \subseteq \Vertices(G)$, and
a routing scheme $\rho : \Peers \times \Peers \rightarrow 2^{\Edges(G)}$,
construct the sparsest overlay graph $H$ for $\Peers$
(in terms of number of overlay edges) satisfying $\ERDC_{G, \rho}(H) \geq 2$.
Of course, one has to make sure that such an overlay graph $H$ exists, so in
the context of the sparsest $2$-$\ERDC$ overlay graph problem we always assume
that $\ERDC_{G, \rho}(K_{\Peers}) \geq 2$, where $K_{\Peers}$ is the complete graph
on $\Peers$.
This means that a trivial solution with ${n}\choose{2}$ edges, where $n =
|\Peers|$, always exists and the challenge is to construct a sparser one.

\subsection{Hardness}

We begin our treatment of this problem with a hardness result.

\begin{theorem} \label{theorem:HardnessSparsestERDC}
The sparsest $2$-$\ERDC$ overlay graph problem is NP-hard.
\end{theorem}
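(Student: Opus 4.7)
I would prove NP-hardness by a polynomial-time reduction from the \emph{minimum $2$-edge-connected spanning subgraph} problem (M2ECSS), which is classically NP-hard. In M2ECSS the input is a $2$-edge-connected graph $H^{\star} = (V, E^{\star})$ and the goal is to compute the smallest subset $E' \subseteq E^{\star}$ for which $(V, E')$ remains $2$-edge-connected.

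Given such an instance, I would build an instance $(G, \Peers, \rho)$ of the sparsest $2$-$\ERDC$ overlay graph problem as follows. Set $\Peers = V$. Construct $G$ by subdividing every edge $e = (u,v) \in E^{\star}$ with a new private vertex $x_e$, contributing underlying edges $(u, x_e)$ and $(v, x_e)$. Augment $G$ with additional routing gadgetry that guarantees every peer pair has a simple underlying route (one natural choice is a hub vertex adjacent to every peer, used only by non-edge pairs). Define the routing scheme $\rho$ so that each $E^{\star}$-overlay $(u, v)$ is routed through its own subdivision vertex, $\rho(u, v) = (u, x_{(u, v)}, v)$, while every non-edge peer pair $(u, v) \notin E^{\star}$ is routed along a path whose underlying edges are shared with the $E^{\star}$-overlays it traverses (for instance, along a shortest $(u,v)$-walk in the subdivided $H^{\star}$). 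The intent of the design is that a non-edge overlay be strictly dominated, in terms of underlying-edge failure modes, by the $E^{\star}$-overlays on its route.

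The central claim to prove is that $H^{\star}$ admits a $2$-ECSS of size at most $k$ if and only if there exists an overlay graph $H$ on $\Peers$ with $|\Edges(H)| \leq k$ and $\ERDC_{G, \rho}(H) \geq 2$. The forward direction is immediate: a $k$-edge $2$-ECSS $(V, E')$ of $H^{\star}$ yields the overlay graph $H$ with $\Edges(H) = E'$, and since every $E^{\star}$-overlay uses its own private subdivision vertex, each single underlying edge removal kills at most one overlay edge in $H$ -- a loss absorbed by the classical $2$-edge-connectivity of $H$.

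The reverse direction is the technical heart of the argument, and is where I expect the main obstacle. Given an overlay graph $H$ with $\ERDC_{G, \rho}(H) \geq 2$, I would produce a $2$-ECSS of $H^{\star}$ of size at most $|\Edges(H)|$ via a structural replacement argument: iteratively substitute each non-edge overlay $(u, v) \in \Edges(H)$ with a suitable $E^{\star}$-overlay drawn from its underlying route, exploiting the fact that any underlying edge removal killing $(u, v)$ simultaneously kills the corresponding $E^{\star}$-overlay sharing that subdivision vertex, so the swap cannot strictly decrease post-removal connectivity. The delicate step is verifying that this swap can always be performed without increasing $|\Edges(H)|$ and without ever violating $\ERDC \geq 2$ along the way; this hinges crucially on the specific design of $\rho$ for non-edge pairs, ensuring that non-edge overlays never provide reconnection across an underlying cut that an $E^{\star}$-overlay on their route cannot provide just as well. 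Once the replacement terminates, the resulting overlay graph consists entirely of $E^{\star}$-edges, so its $\ERDC \geq 2$ property reduces to classical $2$-edge-connectivity of $(V, \Edges(H))$, which is then a $2$-ECSS of $H^{\star}$ of the required size.
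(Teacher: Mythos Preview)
Your reduction has a genuine gap in the reverse direction: it is \emph{not} true that the minimum size of an overlay $H$ with $\ERDC_{G,\rho}(H)\ge 2$ equals the minimum size of a $2$-ECSS of $H^{\star}$ under either of the two routings you suggest. The one-for-one replacement step fails because a non-edge overlay $(u,v)$ supplies long-range connectivity that no single $E^{\star}$-edge on its underlying route can replicate.

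Here is a concrete counterexample. Take $H^{\star}$ to be the bowtie: two triangles on $\{a,b,c\}$ and $\{c,d,e\}$ sharing the vertex $c$. This graph is $2$-edge-connected and its unique $2$-ECSS is $H^{\star}$ itself, so the M2ECSS optimum is~$6$. Now consider the overlay $H$ with edge set $\{(a,b),(b,c),(c,d),(d,e),(a,e)\}$, a $5$-cycle in which $(a,e)$ is the only non-$E^{\star}$ edge. Under either of your routings (through the subdivided $a$--$c$--$e$ path, or through a hub), removing any single underlying edge kills at most one overlay edge of $H$; since $H$ is a cycle, it remains connected. Hence $\ERDC_{G,\rho}(H)\ge 2$ with only $5$ overlay edges, strictly fewer than the M2ECSS optimum. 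Your replacement argument breaks explicitly here: swapping $(a,e)$ for, say, $(a,c)$ yields an overlay in which removing the subdivision edge of $(d,e)$ isolates~$e$.

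The paper's proof sidesteps this obstacle by reducing from \textsc{Hamiltonian Path} instead. All non-$E(G_0)$ peer pairs are routed through a \emph{single common} underlying edge $(x,y)$, so any overlay $H$ with $\ERDC\ge 2$ and $|E(H)|=n$ must be a Hamiltonian cycle on $\Peers$ containing at most one such ``virtual'' edge; stripping that edge yields a Hamiltonian path in $G_0$. The crucial design choice is that all non-edges share one fragile link, which caps their number in any feasible overlay --- exactly the control your construction lacks.
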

\begin{proof}
The assertion is proved by a reduction from the \emph{Hamiltonian path}
problem.
Consider an $n$-vertex graph $G_0$ input to the Hamiltonian path problem.
Transform it into an instance of the sparsest $2$-$\ERDC$ overlay graph
problem as follows:
Construct the underlying graph $G$ by setting $\Vertices(G) = \Vertices(G_0)
\cup \{x, y\}$ and $\Edges(G) = \Edges(G_0) \cup \{(x, y)\} \cup \{ (v, x),
(v, y) \mid v \in \Vertices(G_0) \}$ and let $\Peers = \Vertices(G_0)$.
Define the routing scheme $\rho$ by setting
$$
\rho(u, v) =
\left\{
\begin{array}{ll}
(u, v) & \text{ if } (u, v) \in \Edges(G_0) \\
(u, x, y, v) & \text{ otherwise.}
\end{array}
\right.
$$
This transformation is clearly polynomial in $n$.

We argue that $G_0$ admits a Hamiltonian path if and only if there exists an
overlay graph $H$ for $\Peers$ so that $|\Edges(H)| = n$ and $\ERDC_{G,
\rho}(H) \geq 2$.
To that end, suppose that $G_0$ admits a Hamiltonian path $\pi$.
If $\pi$ can be closed to a Hamiltonian cycle (in $G_0$), then take $H$ to be
this Hamiltonian cycle.
Otherwise, take $H$ to be the cycle consisting of $\pi$ and a virtual edge
connecting between $\pi$'s endpoints.
In either case, $H$ clearly has $n$ edges and by the design of
$\rho$, $H$ satisfies $\ERDC_{G, \rho}(H) = 2$.

Conversely, if there exists an overlay graph $H$ for $\Peers$ so that
$|\Edges(H)| = n$ and $\ERDC_{G, \rho}(H) \geq 2$, then $H$ must form a
Hamiltonian cycle $C$ in $\Peers \times \Peers$.
This cycle can contain at most one virtual edge as otherwise, the removal of
$(x, y)$ breaks two edges of $C$ which means that $\ERDC_{G, \rho}(H) < 2$.
By removing this virtual edge, we are left with a Hamiltonian path in $G_0$.
\end{proof}

\subsection{Constructing sparse $2$-$\ERDC$ overlay graphs}

On the positive side, we develop a polynomial time logarithmic approximation
algorithm for the sparsest $2$-$\ERDC$ overlay graph problem.
Our algorithm proceeds in two stages:
First, we take $T$ to be an arbitrary spanning tree of $\Peers \times
\Peers$.
Subsequently, we aim towards (approximately) solving the following
optimization problem, subsequently referred to as the \emph{overlay
augmentation} problem:
augment $T$ with a minimum number of $\Peers \times \Peers$ edges so that
the resulting overlay graph $H$ satisfies $\ERDC_{G, \rho}(H) \geq 2$.

We will soon explain how we cope with this optimization problem, but first let
us make the following observation.
Denote the edges in $\rho(T)$ by $\rho(T) = \{ e_1, \dots, e_{\ell} \}$ and
consider some overlay graph $H$ such that $\Edges(H) \supseteq T$ and some $1
\leq i \leq \ell$.
Let $F_{i}(H)$ be the collection of connected components of the graph obtained
from $H$ by removing all edges $e \in \Edges(H)$ such that $e_i \in \rho(e)$.
Fix
$$
\kappa_{i}(H) = |F_{i}(H)| - 1
\quad \text{and} \quad
\kappa(H) = \sum_{i = 1}^{\ell}
\kappa_{i}(H) \, .
$$
We think of $\kappa(H)$ as a measure of the distance of the overlay graph
$H$ from being a feasible solution to the overlay augmentation problem (i.e.,
$\ERDC(H) \geq 2$).

\begin{observation*}
An overlay graph $H \supseteq T$ satisfies $\ERDC(H) \geq 2$ if and only if
$\kappa(H) = 0$.
\end{observation*}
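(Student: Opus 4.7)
The plan is to unfold the definition of $\ERDC(H) \geq 2$ into a statement about removing single underlying edges, and then observe that removals of edges outside $\rho(T)$ cannot hurt connectivity because $T$ itself, whose implementation avoids such edges, still spans the peer set.

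First, I would restate the condition more concretely. By the definition of $\ERDC_{G,\rho}(s,t,H)$, we have $\ERDC(H) \geq 2$ precisely when no single edge $e \in \Edges(G)$ suffices to disconnect some peer pair in $H$. Equivalently, for every $e \in \Edges(G)$, the overlay subgraph
$$H_e \;=\; H \,\setminus\, \{\, f \in \Edges(H) : e \in \rho(f)\,\}$$
is connected. In this language, $\kappa_i(H) = 0$ is exactly the statement that $H_{e_i}$ is connected, and $\kappa(H) = 0$ says that this holds simultaneously for all $e_i \in \rho(T)$.

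For the forward implication ($\Rightarrow$), suppose $\ERDC(H) \geq 2$. Since $\rho(T) \subseteq \Edges(G)$, applying the equivalent formulation to each $e_i \in \rho(T)$ gives $|F_i(H)| = 1$, so $\kappa_i(H) = 0$ for every $i$, whence $\kappa(H) = 0$.

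For the reverse implication ($\Leftarrow$), suppose $\kappa(H) = 0$, i.e., $\kappa_i(H) = 0$ for every $1 \leq i \leq \ell$. I need to show $H_e$ is connected for every $e \in \Edges(G)$. Split into two cases. If $e = e_i$ for some $i$, the assumption $\kappa_i(H) = 0$ gives the conclusion directly. If $e \notin \rho(T)$, then by definition of $\rho(T)$ no overlay edge in $T$ uses $e$ in its route, so $T \subseteq \Edges(H_e)$; since $T$ is a spanning tree on the vertex set $\Peers = \Vertices(H)$ and $H \supseteq T$, we conclude that $H_e$ is connected. In both cases $H_e$ is connected, so $\ERDC(H) \geq 2$.

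There is no real obstacle here; the whole content is the remark that $T$ acts as a connectivity backbone for every cut induced by an edge outside $\rho(T)$, which is why it suffices to worry about the $\ell$ edges of $\rho(T)$ and justifies defining $\kappa(H)$ as a sum over just those edges. This observation is what makes the ensuing approximation algorithm tractable, so I would present the proof in a couple of lines immediately after the statement without further embellishment.
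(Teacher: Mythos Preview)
Your proof is correct and follows essentially the same argument as the paper: the forward direction is immediate, and for the reverse you split on whether the removed underlying edge lies in $\rho(T)$ (handled by $\kappa_i(H)=0$) or not (handled by the spanning tree $T$ remaining intact). Your explicit introduction of $H_e$ and the reformulation of $\ERDC(H)\geq 2$ as ``$H_e$ is connected for every $e\in\Edges(G)$'' makes the two-case split a bit more transparent than the paper's terse version, but the content is identical.
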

\begin{proof}
If $\ERDC(H) \geq 2$, then $F_{i}(H)$ must consist of a single connected
component for every $1 \leq i \leq \ell$, thus $\kappa(H) = 0$.
Conversely, if $\kappa(H) = 0$, then necessarily $|F_{i}(H)| = 1$ for every $1
\leq i \leq \ell$, which means that $H$ does not disconnect by the removal of
any edge $e_i \in \rho(T)$.
It is also clear that the removal of any edge in $\Edges(G) - \rho(T)$ does
not disconnect $H$ as the tree $T$ remains intact.
Therefore, $\ERDC(H) \geq 2$.
\end{proof}

Consider some edge $e \in (\Peers \times \Peers) - \Edges(H)$ and let $H \cup
\{e\}$ denote the overlay graph obtained from $H$ by adding the edge $e$.
By the definition of the parameter $\kappa$, we know that
$
\Delta_{i}(e, H) = \kappa_{i}(H) - \kappa_{i}(H \cup \{e\})
$
is either $0$ or $1$ for any $1 \leq i \leq \ell$.
Fixing
$
\Delta(e, H) = \kappa(H) - \kappa(H \cup \{e\})
$,
we observe that:
$\Delta(e, H) = \sum_{i = 1}^{\ell} \Delta_{i}(e, H)$
referred to as Property $\mathbf{(\star)}$.

We are now ready to complete the description of our approximation algorithm.
Starting from $H = T$, the algorithm gradually adds edges to $H$ as long as
$\kappa(H) > 0$ according to the following greedy rule.
At any intermediate step, we add to $H$ an edge $e \in (\Peers \times \Peers)
- \Edges(H)$ that yields the maximum $\Delta(e, H)$.
When $\kappa(H)$ decreases to zero, the algorithm terminates (recall that this
means that $\ERDC(H) \geq 2$).

The analysis of our approximation algorithm relies on the following
proposition.

\begin{proposition} \label{proposition:Submodularity}
The parameter $\kappa$ can be computed in polynomial time.
Moreover, for every two overlay graphs $H_1, H_2$ such that $\Edges(H_1)
\subseteq \Edges(H_2)$, we have
\begin{smallitemize}
\item[\rm (1)]
$\kappa(H_1) \geq \kappa(H_2)$; and
\item[\rm (2)]
$\Delta(e, H_1) \geq \Delta(e, H_2)$ for every edge $e \in \Peers \times
\Peers$.
\end{smallitemize}
\end{proposition}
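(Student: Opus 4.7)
The plan is to prove all three claims via an index-by-index analysis, exploiting the fact that each $\kappa_i$ is determined solely by the connected components of a particular subgraph of $H$ on vertex set $\Peers$, namely the one consisting of those overlay edges whose routing avoids $e_i$. I will write $H^{(i)}$ for this subgraph throughout. Polynomial-time computability of $\kappa$ is then immediate: $\ell = |\rho(T)| \leq |\Edges(G)|$, membership $e_i \in \rho(e)$ is testable in polynomial time, so $H^{(i)}$ can be constructed and its connected components enumerated by BFS or DFS in polynomial time; summing $\kappa_i(H) = |F_i(H)| - 1$ over the polynomially many $i$ yields $\kappa(H)$.

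For part (1), I would fix $i$ and observe that $\Edges(H_1) \subseteq \Edges(H_2)$ forces $H_1^{(i)} \subseteq H_2^{(i)}$, since the ``avoids $e_i$'' condition is an intrinsic property of each individual edge and does not depend on the surrounding graph. Adding edges to a graph can only merge connected components, hence $|F_i(H_1)| \geq |F_i(H_2)|$, i.e., $\kappa_i(H_1) \geq \kappa_i(H_2)$; summing over $i$ yields $\kappa(H_1) \geq \kappa(H_2)$.

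For part (2), I would fix an edge $e \in \Peers \times \Peers$ and an index $i$, and split on whether $e_i \in \rho(e)$. If $e_i \in \rho(e)$, then $e$ is discarded when forming $(H \cup \{e\})^{(i)}$, so this subgraph equals $H^{(i)}$ and $\Delta_i(e, H) = 0$ for every $H$; the inequality is trivial in this case. Otherwise $\Delta_i(e, H) \in \{0, 1\}$, and equals $1$ exactly when the two endpoints of $e$ lie in distinct components of $H^{(i)}$. The key step is then the observation that if the endpoints of $e$ lie in distinct components of the larger graph $H_2^{(i)}$, they must also lie in distinct components of the smaller subgraph $H_1^{(i)}$, since any $H_1^{(i)}$-path connecting them would also be an $H_2^{(i)}$-path. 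Therefore $\Delta_i(e, H_1) \geq \Delta_i(e, H_2)$, and summing over $i$ via Property $(\star)$ gives $\Delta(e, H_1) \geq \Delta(e, H_2)$.

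The main conceptual step is the reduction of each $\kappa_i$ to a connected-components count of a fixed subgraph determined only by $e_i$ and the edges of $H$ that avoid $e_i$ in their routing; once that reduction is in hand, both monotonicity and submodularity follow from elementary facts about how adding a single edge affects connected components. The only mild subtlety is that the newly added edge may itself route through the very $e_i$ under consideration, but in that case $\Delta_i$ vanishes identically, so the bound holds trivially and nothing is lost.
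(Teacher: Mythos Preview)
Your proposal is correct and follows essentially the same route as the paper's own proof: reduce to the index-by-index inequalities $\kappa_i(H_1) \geq \kappa_i(H_2)$ and $\Delta_i(e, H_1) \geq \Delta_i(e, H_2)$ via Property~$(\star)$, and handle the latter by observing that $\Delta_i(e, H_2) = 1$ forces $e_i \notin \rho(e)$ together with the endpoints of $e$ lying in distinct components of $F_i(H_2)$, which (since $\Edges(H_1) \subseteq \Edges(H_2)$) forces the same in $F_i(H_1)$. Your explicit introduction of the subgraph $H^{(i)}$ and the case split on whether $e_i \in \rho(e)$ are just a more detailed unpacking of what the paper dismisses as ``clear from the definition'' and handles implicitly.
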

\begin{proof}
The fact that $\kappa$ can be computed efficiently and the fact that
$\kappa(H_1) \geq \kappa(H_2)$ are clear from the definition of $\kappa$, so
our goal is to prove that $\Delta(e, H_1) \geq \Delta(e, H_2)$ for every $e
\in \Peers \times \Peers$.
By Property $\mathbf{(\star)}$, it suffices to show that
$\Delta_{i}(e, H_1) \geq \Delta_{i}(e, H_2)$ for every $1 \leq i \leq \ell$.
If $\Delta_{i}(e, H_2) = 0$, then this holds vacuously, so suppose that
$\Delta_{i}(e, H_2) = 1$.
This means that $e_i \notin \rho(e)$ and the endpoints of $e$ belong to
different connected components in $F_{i}(H_2)$.
But since $\Edges(H_1) \subseteq \Edges(H_2)$, it follows that the endpoints
of $e$ must also belong to different connected components in $F_{i}(H_1)$,
hence $\Delta_{i}(e, H_1) = 1$ as well.
\end{proof}

Proposition~\ref{proposition:Submodularity} implies that the overlay
augmentation problem falls into the class of \emph{submodular cover} problems
(cf. \cite{W82,BKP01}) and our greedy approach is guaranteed to have an
approximation ratio of at most $\ln(\kappa(T)) + 1 = O (\log N)$, where $N =
|\Vertices(G)|$.
More formally, letting $\hat{H}$ be a sparsest overlay graph such that
$\hat{H} \supseteq T$ and $\ERDC(\hat{H}) \geq 2$, it is guaranteed that the
overlay graph $H$ generated by our greedy approach satisfies
$|\Edges(H) - T| \leq O (\log N) \cdot |\Edges(\hat{H}) - T|$.

To conclude the analysis, let $H^{*}$ be an optimal solution to the sparsest
$2$-$\ERDC$ overlay graph problem.
Clearly, $|\Edges(H^{*})| > n - 1 = |T|$.
Moreover, since $\Edges(H^{*}) \cup T$ is a candidate for $\hat{H}$, it
follows that $|\Edges(H^{*}) \cup T| \geq |\hat{H}|$, thus $|\Edges(H^{*})|
\geq |\Edges(\hat{H}) - T|$.
Therefore,
\begin{align*}
|\Edges(H)|
~ \leq & ~
O (\log N) \cdot |\Edges(\hat{H}) - T| + |T| \\
\leq & ~
O (\log N) \cdot |\Edges(H^{*})| + |\Edges(H^{*})| \\
~ = & ~
O (\log(N) \cdot |\Edges(H^{*})|)
\end{align*}
which establishes the following theorem.

\begin{theorem} \label{theorem:ApproximationSparsestERDC}
The sparsest $2$-$\ERDC$ overlay graph problem admits a polynomial-time
logarithmic-approximation.
\end{theorem}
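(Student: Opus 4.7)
The plan is to follow the two-stage construction outlined right before the theorem and to cast the second stage (overlay augmentation) as a submodular cover problem, to which a classical greedy analysis applies.

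First, I would fix an arbitrary spanning tree $T$ of $K_{\Peers}$ (which exists, and has $n-1$ edges). Let $\hat{H}$ denote a sparsest overlay graph containing $T$ with $\ERDC_{G,\rho}(\hat{H})\geq 2$, and let $H^{*}$ denote a sparsest overlay graph (without the constraint $\Edges(H)\supseteq T$) with $\ERDC_{G,\rho}(H^{*})\geq 2$. The key reduction is that $\Edges(H^{*})\cup T$ is a feasible candidate for $\hat{H}$ (adding edges cannot decrease $\ERDC$), so $|\Edges(\hat{H})-T|\leq |\Edges(H^{*})\cup T|-|T|\leq |\Edges(H^{*})|$. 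Thus a logarithmic approximation for the overlay augmentation problem yields a logarithmic approximation for the original problem, because $|T|=n-1<|\Edges(H^{*})|$.

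Second, I would analyze the greedy algorithm on the augmentation problem. By the Observation, feasibility is equivalent to $\kappa(H)=0$, and Property $\mathbf{(\star)}$ plus the monotonicity statement of Proposition~\ref{proposition:Submodularity} gives exactly that the set function $f(S)=\kappa(T)-\kappa(T\cup S)$, defined on candidate augmenting edge sets $S\subseteq \Peers\times\Peers$, is nonnegative, monotone, and submodular, with integer values and $f(\emptyset)=0$. Feasibility corresponds to reaching the maximum value $\kappa(T)$. This is exactly the setting of the submodular cover problem of Wolsey~\cite{W82}, for which the natural greedy rule---pick the edge $e$ maximizing the marginal gain $\Delta(e,H)=f(\Edges(H)\cup\{e\})-f(\Edges(H))$---delivers a $(\ln(\kappa(T))+1)$-approximation relative to the optimal augmenting set $\Edges(\hat{H})-T$.

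Third, I would bound $\kappa(T)=\sum_{i=1}^{\ell}\kappa_i(T)$ by a polynomial in $N=|\Vertices(G)|$: since $\ell\leq |\Edges(G)|=O(N^{2})$ and each $\kappa_i(T)\leq n-1<N$ (as $F_i(T)$ partitions $\Peers$ into at most $n$ components), we obtain $\kappa(T)=O(N^{3})$ and hence $\ln(\kappa(T))+1=O(\log N)$. Combining this with the comparison $|\Edges(\hat{H})-T|\leq |\Edges(H^{*})|$ and the trivial $|T|\leq |\Edges(H^{*})|$, the chain of inequalities displayed just above the theorem statement yields $|\Edges(H)|=O(\log N)\cdot |\Edges(H^{*})|$ for the greedy output $H$. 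Finally, polynomial running time follows from the first part of Proposition~\ref{proposition:Submodularity} (each $\kappa(H')$ is computed in polynomial time), together with the fact that the algorithm performs at most $\binom{n}{2}$ greedy iterations, each scanning $O(n^{2})$ candidate edges.

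The main obstacle I would expect is not in the greedy analysis itself, which is standard once submodularity is in hand, but rather in the comparison step linking $\hat{H}$ to $H^{*}$: one must be careful that the $\ERDC$ condition is preserved under edge addition (so that $\Edges(H^{*})\cup T$ is indeed a feasible candidate for the augmentation problem), which follows because every $(s,t)$-path in $H^{*}$ remains in $H^{*}\cup T$, so any edge cut of the larger graph is also a cut of the smaller one.
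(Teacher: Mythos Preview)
Your proposal is correct and follows essentially the same approach as the paper: the two-stage construction (spanning tree $T$ followed by greedy augmentation), the recasting of the augmentation step as a submodular cover instance via Proposition~\ref{proposition:Submodularity} and Wolsey's bound, and the comparison $|\Edges(\hat{H})-T|\leq |\Edges(H^{*})|$ together with $|T|<|\Edges(H^{*})|$ are exactly the ingredients the paper uses. Your write-up is in fact slightly more explicit in two places (the polynomial bound $\kappa(T)=O(N^{3})$ and the justification that $\ERDC$ is monotone under edge addition), but these are details the paper leaves implicit rather than genuine differences in strategy.
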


\subsection{A special case}

We now turn to consider instances of the sparsest $2$-$\ERDC$ overlay graph
problem satisfying the following two simplifying assumptions:
\begin{smallitemize}
\item[(1)] all vertices are peers, namely, $\Peers = \Vertices(G)$; and
\item[(2)] $\rho$ maps every edge in $\Edges(G)$ to itself.
\end{smallitemize}
We show that under these assumptions, one can always find a feasible solution
with at most $2 n - 2$ edges.
Since any overlay graph $H$ satisfying $\ERDC(H) \geq 2$ must have at least
$n$ edges, this immediately implies a $2$-approximation for our problem.

\LongVersion 
The construction proceeds, once again, in two stages:
First, we take $H$ to be an arbitrary spanning tree $T$ of $G$.
Then, for each edge $e \in T$, we add to $H$ some edge $e' \in
\Edges(G)$ such that $e$ belongs to the cycle closed by appending $e'$ to $T$.
Note that such an edge $e'$ exists since $G$ is $2$-edge connected (as
otherwise, $\ERDC_{G, \rho}(K_{\Peers}) < 2$).
The overlay graph $H$ clearly contains at most $2 n - n$ edges.
Moreover, assumption~(A2) implies that $\rho$ maps every edge in $\Edges(H)$
to itself.
Therefore, our construction guarantees that $\ERDC(H) \geq 2$:
on one hand, the spanning tree $T$ ensures that the graph obtained from $H$ by
removing the edge $e$ is connected for every edge $e \notin T$;
on the other hand, $H$ does not disconnect by the removal of any single edge
$e \in T$ due to the second stage of the construction.
\LongVersionEnd 

\begin{theorem} \label{theorem:ConstructionSparseERDC}
Under assumptions~(1) and~(2), an overlay graph $H$ with $2 n - 2$ edges
satisfying $\ERDC(H) \geq 2$ can be constructed in polynomial time.
\end{theorem}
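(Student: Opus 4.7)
The plan is to construct $H$ in two stages, starting from a spanning tree and then augmenting it with one ``protecting'' edge per tree edge. First, I would take an arbitrary spanning tree $T$ of $G$ and initialize $H := T$. This gives $n-1$ overlay edges and ensures that $H$ is connected. Note that by assumption (2), $\rho$ maps each tree edge to itself, so a single edge removal in $G$ affects at most one overlay edge of $T$.

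Next, for each tree edge $e \in T$, I would pick some non-tree edge $e' \in \Edges(G) \setminus T$ such that $e$ lies on the fundamental cycle that $e'$ closes with $T$, and add $e'$ to $H$. The existence of such an $e'$ for every $e \in T$ is the only real point requiring justification: it is equivalent to $G$ being $2$-edge-connected, which in turn follows from the standing assumption $\ERDC_{G,\rho}(K_{\Peers}) \geq 2$ (if $G$ had a bridge $e$, then deleting $e$ from $G$ would disconnect $K_{\Peers}$ no matter the routing, since $\rho$ is the identity on edges). Each such augmenting edge can be found in polynomial time by examining the fundamental cut associated with $e$ in $T$.

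At this point, $|\Edges(H)| \leq (n-1) + (n-1) = 2n - 2$, as required. To verify $\ERDC_{G,\rho}(H) \geq 2$, take any edge $f \in \Edges(G)$. By assumption (2), removing $f$ from $G$ can only disconnect an overlay edge equal to $f$ itself. If $f \notin T$, then $T \subseteq H$ remains intact after removal, so $H$ stays connected. If $f \in T$, then by the second stage there is some $e' \in H$ whose fundamental cycle contains $f$, so $T \cup \{e'\} \setminus \{f\}$ remains connected; hence $H$ minus (the overlay edge corresponding to) $f$ is still connected.

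The main obstacle, such as it is, is verifying that the augmentation step is always feasible and that no single edge removal can break two overlay edges simultaneously; both reduce to the identity routing assumption and $2$-edge-connectivity of $G$. Since any valid $H$ must have at least $n$ edges (it must be connected, and connectedness alone requires $n-1$ edges, while $\ERDC(H) \geq 2$ forces at least $n$), the bound $2n-2$ yields a $2$-approximation immediately.
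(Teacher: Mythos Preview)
Your proposal is correct and follows essentially the same two-stage construction as the paper: start with a spanning tree $T$ of $G$, then for each tree edge add a non-tree edge of $G$ whose fundamental cycle covers it, using $2$-edge-connectedness of $G$ (derived from $\ERDC_{G,\rho}(K_{\Peers})\geq 2$) to guarantee such an edge exists. Your correctness argument for $\ERDC(H)\geq 2$ and the edge count match the paper's reasoning; the only cosmetic difference is that you explicitly phrase the augmentation via fundamental cycles and note that the bound is ``at most'' $2n-2$, which is indeed what the construction yields.
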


\newpage

\clearpage

\pagenumbering{roman}
\appendix

\renewcommand{\theequation}{A-\arabic{equation}}
\setcounter{equation}{0}
\begin{center}
\textbf{\large{APPENDIX}}
\end{center}

\section{Hardness of approximation for the simple path variants}
\label{appendix:Hardness}
Similarly to \cite{prev}, we base our hardness results on the relationship
between path-disjoint deep connectivity and encoding \emph{set systems}.
An \emph{$(m, n)$-set system} $\CalS$ is a pair $(D, X)$, where: 
\begin{smallitemize}
\item
$D$ is a domain of elements, $|D| = m$; and
\item 
$X \subseteq 2^D$ is a collection of sets in the domain $D$, $|X| = n$.
\end{smallitemize}
It is convenient to represent $\CalS$ as a Boolean characteristic function,
$\CalS : [m] \times [n] \rightarrow \{0, 1\}$, so that for every $i \in [m]$
and $j \in [n]$, $\CalS(i, j) = 1$ if and only if the $i^{\text{th}}$ element
of $D$ is included in the $j^{\text{th}}$ set of $X$.

\begin{lemma} \label{result:EncodeSetSystem}
For every overlay graph $H$, edge subset $F \subseteq \Edges(H)$ of
cardinality $|F| = n$, and $(m, n)$-set system $\CalS=(D,X)$, there exist an
underlying graph $G$ and a routing scheme $\rho : \Edges(H) \rightarrow
2^{\Edges(G)}$ such that: 
\begin{smallitemize}
\item[\rm (1)]
$|\Vertices(G)| = |\Vertices(H)| + O(m \cdot n)$;
\item[\rm (2)]
$\Edges(G) = (\Edges(H) - F) \cup E_{D} \cup E_{\rho}$, where
$E_D$ and $E_\rho$ are sets of new edges on $\Vertices(G)$, $|E_{D}| = m$, and
$|E_{\rho}| = O (m \cdot n)$;
\item[\rm (3)]
for every $e_j \in F$, $j \in [n]$, and $e_i \in E_{D}$, $i \in [m]$, it holds
that $e_i \in \rho(e_j) \iff \CalS(i, j) = 1$; and
\item[\rm (4)]
for every $e \in E_{\rho}$, there exists a unique $e_j \in F$, $j \in [n]$,
such that $e \in \rho(e_j)$.
\end{smallitemize}
\end{lemma}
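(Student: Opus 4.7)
The plan is to encode the domain $D$ by a set $E_D$ of $m$ ``shared'' edges in $G$ and to encode each set of $X$, equivalently each edge $e_j \in F$, by carefully designing the path $\rho(e_j)$ so that it traverses exactly those edges of $E_D$ that correspond to elements in the $j^{\text{th}}$ set. Edges $e \in \Edges(H) - F$ will be kept intact in $G$ and mapped trivially by $\rho$ to themselves, which is consistent with the required form $\Edges(G) = (\Edges(H)-F) \cup E_D \cup E_\rho$.

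First, I would introduce the shared infrastructure: for every $i \in [m]$, add two fresh vertices $a_i, b_i$ to $\Vertices(G)$ and the edge $e_i = (a_i, b_i)$ to $E_D$. Next, for each $j \in [n]$, writing $e_j = (u_j, v_j)$ and letting $i_1 < i_2 < \cdots < i_{k_j}$ enumerate those $i$ with $\CalS(i,j)=1$, I would realize $\rho(e_j)$ as the simple path
\[
u_j \,-\, w_{j,0} \,-\, a_{i_1} \,-\, b_{i_1} \,-\, w_{j,1} \,-\, a_{i_2} \,-\, b_{i_2} \,-\, w_{j,2} \,-\, \cdots \,-\, a_{i_{k_j}} \,-\, b_{i_{k_j}} \,-\, w_{j,k_j} \,-\, v_j,
\]
where each $w_{j,\ell}$ is a private ``bridge'' vertex freshly introduced for the pair $(j,\ell)$; in the degenerate case $k_j = 0$ the path is simply $u_j - w_{j,0} - v_j$. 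The $k_j$ shared edges $a_{i_\ell}$--$b_{i_\ell}$ along this path lie in $E_D$, while all remaining edges of the path are new and are added to $E_\rho$.

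To verify the four properties: property~(3) holds since $\rho(e_j)$ uses the shared edge $e_i$ exactly when $i$ is one of the $i_\ell$'s, i.e., iff $\CalS(i,j)=1$. Property~(4) follows because every edge of $E_\rho$ is incident to some bridge vertex $w_{j,\ell}$ that belongs solely to the private path of $e_j$, so no other $e_{j'}$ can reuse it. For the counting in properties~(1) and~(2), the construction adds $2m$ vertices for $\{a_i, b_i\}$ together with $\sum_j (k_j+1) \le n(m+1)$ bridge vertices, giving $|\Vertices(G)| = |\Vertices(H)| + O(mn)$; it produces $m$ edges in $E_D$ and $\sum_j (2k_j+2) \le 2n(m+1)$ edges in $E_\rho$, giving $|E_\rho| = O(mn)$. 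There is no real obstacle here; the only point requiring mild care is ensuring that the bridge vertices for distinct pairs $(j,\ell)$ remain pairwise disjoint and disjoint from $\Vertices(H) \cup \{a_i, b_i\}_{i \in [m]}$, which is immediate since they are introduced fresh, and verifying that each $\rho(e_j)$ is indeed simple, which is clear because all the $w_{j,\ell}$'s are distinct and the $a_{i_\ell}, b_{i_\ell}$'s for $\ell \in [k_j]$ form pairwise disjoint pairs.
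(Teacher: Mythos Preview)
Your construction is correct and essentially the same as the paper's: both introduce two fresh endpoints per domain element to form the shared edges $E_D$, route $\rho(e_j)$ through precisely those shared edges indexed by the $j^{\text{th}}$ set, and use private auxiliary vertices to keep the connecting $E_\rho$-edges unique to each $j$. The only cosmetic difference is that the paper first builds a multigraph with direct connector edges and then subdivides each $E_\rho$-edge to eliminate parallel edges, whereas you insert the private bridge vertices $w_{j,\ell}$ up front; the resulting graphs and bounds are the same up to constants.
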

\begin{proof}
Informally, $G$ and $\rho$ are designed so that the set system $\CalS=(D,X)$
is encoded in such a way that $E_{D}$ corresponds to the element domain $D$
and $F$ corresponds to the set collection $X$. For a more precise description 
of the design of the underlying graph $G$ and routing scheme $\rho$, 
let us first present a simpler construction where $G$ is allowed to be 
a multigraph (with parallel edges).

The vertex set of $G$ consists of the vertices of $H$ and two additional new
vertices $v_{i}^{a}, v_{i}^{b}$ for every $i \in [m]$.
The edge set of $G$ consists of three disjoint subsets.
The first subset is just the edges of $\Edges(H) - F$;
$\rho$ maps every such $H$-edge to itself, namely, the edge 
$(x, y) \in \Edges(H) - F$ is mapped to the path in $G$
that consists of the single edge $(x, y)$.
The second subset is $E_{D} = \{ e_i = (v_{i}^{a}, v_{i}^{b}) \mid i \in [m]\}$.
The routing scheme $\rho$ is designed to guarantee that an edge $e_i \in
E_{D}$ appears in the path $\rho(e_j)$, $e_j \in F$, if and only if $\CalS(i,
j) = 1$.
For that purpose, we introduce the third subset $E_{\rho}$:
for each $j \in [n]$, assuming that the edge $e_j \in F$ connects the vertices
$x$ and $y$ in $H$ and that the $j^{\text{th}}$ set in $\CalS$ is $\{ i_1,
\dots, i_k \} \subseteq [m]$, we add to $E_{\rho}$ the edge $(x,
v_{i_1}^{a})$, the edges $(v_{i_{\ell}}^{b}, v_{i_{\ell + 1}}^{a})$ for $\ell
= 1, \dots, k - 1$, and the edge $(v_{i_{k}}^{b}, y)$.
It is important to point out that a new copy of those edges are added to
$E_{\rho}$ for each $j \in [n]$, which may create edge multiplicities.
Finally, $\rho$ maps the edge $e_j$ to the path $p_j = \langle x, v_{i_1}^{a},
v_{i_1}^{b}, \dots, v_{i_k}^{a}, v_{i_k}^{b}, y \rangle$.
Refer to Figure~\ref{figure:EncodeSetSystem} for an illustration.

The construction of $G$ immediately implies that $|\Vertices(G)| =
|\Vertices(H)| + 2 m$ and that $|E_{D}| = m$.
To see that $|E_{\rho}| = O (n \cdot m)$, observe that each
$E_{\rho}$-edge belongs to some path $p_j$, $j \in [n]$, and that each such
path admits at most two hops for every $i \in [m]$.
It remains to show that for every $j \in [n]$ and $i \in [m]$, $e_i \in
\rho(e_j)$ if and only if $\CalS(i, j) = 1$, which follows directly
from the design of $\rho$ as the path $p_j$ goes through an edge $e_i \in
E_{D}$ if and only if $\CalS(i, j) = 1$.

\begin{figure}
\begin{center}
\includegraphics[width=0.6\linewidth]{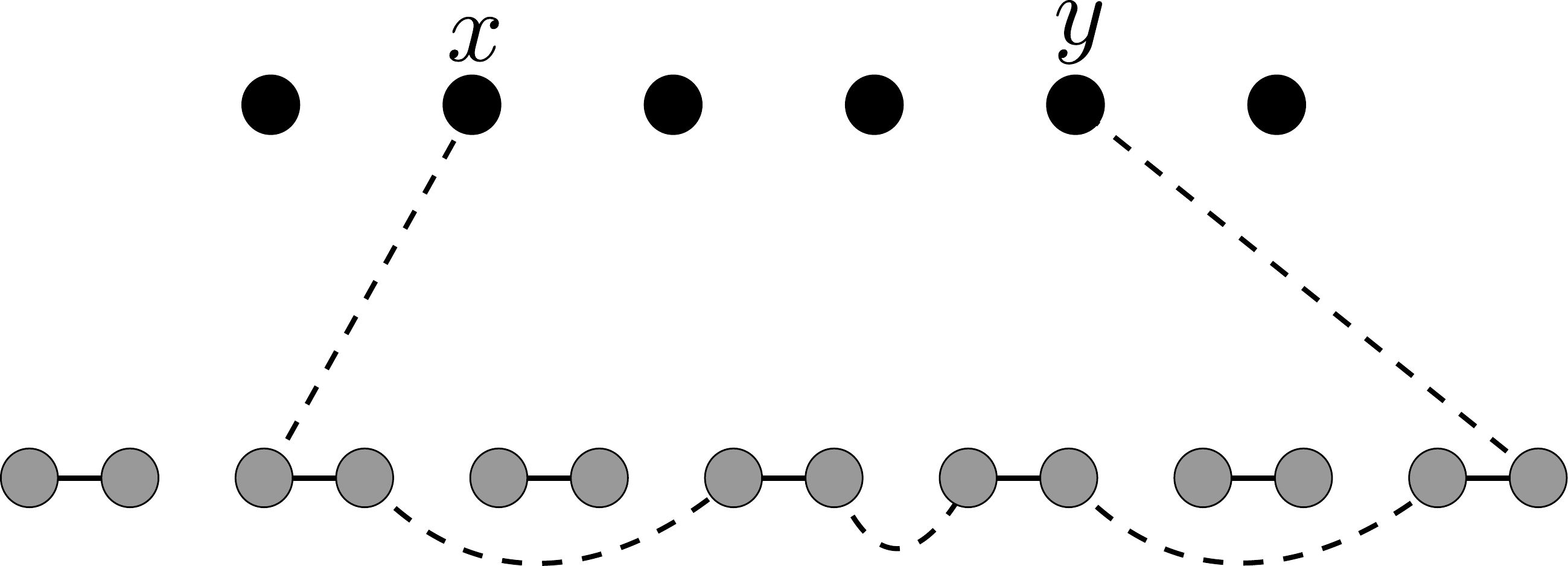}
\end{center}
\caption{\label{figure:EncodeSetSystem}
The underlying (multi)graph $G$.
The vertices of $H$ are depicted by the black circles;
the vertices in $\{ v_{i}^{a}, v_{i}^{b} \mid i \in [m] \}$ are depicted by
the gray circles.
The $E_{D}$-edges are depicted by the solid segments;
the $E_{\rho}$-edges along the path $p_j$ to which $\rho$ maps the edge $e_j =
(x, y) \in \Edges(H)$ are depicted by the dashed segments.
}
\vspace*{-.5cm}
\end{figure}

Finally, note that edge multiplicities in the constructed multigraph $G$ 
may occur only among the edges in $E_{\rho}$.
Hence, by subdividing each edge $e = (u, v) \in E_{\rho}$, replacing it with the
edges $(u, z_e)$ and $(v, z_e)$, where $z_e$ is a new vertex, we can turn the
multigraph $G$ into a simple graph at the cost of adding $O (m \cdot n)$ extra
vertices. The lemma follows.
\end{proof}

The hardness results established in this section are based on
Lemma~\ref{result:EncodeSetSystem} by a reduction from the \emph{set packing}
problem:
Given an $(m, n)$-set system $\CalS = (D, X)$ and a positive integer $k$, the
set packing problem asks whether there exist $k$ pairwise disjoint sets in
$X$.
The problem is known to be NP-complete;
in fact, it is among the original 21 NP-complete problems listed by
Karp~\cite{Karp72}.

The inapproximability of the $\SPDDC_{G, \rho}(s, t, H)$ parameter is proved
by reducing the set packing problem to the problem of distinguishing between 
the case
$\SPDDC_{G, \rho}(s, t, H) = 0$ and the case $\SPDDC_{G, \rho}(s, t, H) \geq 1$.
Given as input to the set packing problem some $(m, n)$-set system $\CalS$
with set collection $X = \{ S^1, \dots, S^n \}$ and positive integer $k$,
we first construct the $(m, k \cdot n)$-set system $\CalS'$ obtained
from $\CalS$ by creating $k$ identical copies $S_{1}^{j}, \dots, S_{k}^{j}$
of each set $S^j$ in $\CalS$.
Clearly, $\CalS'$ admits a set packing of size $k$ if and only if $\CalS$
admits a set packing of size $k$.
Then, we construct the overlay graph $H$ as illustrated in
Figure~\ref{figure:OverlayHops}.
Let $\{ F = (u_{\ell - 1}, v_{\ell}^{j}) \mid 1 \leq \ell \leq k, j \in [n] \}$ and
take $G$ and $\rho$ to be the underlying graph and routing scheme promised by
Lemma~\ref{result:EncodeSetSystem} when applied to $H$, $F$, and
$\CalS'$, where $\rho$ is organized so that edge $(u_{\ell - 1},
v_{\ell}^{j})$ in $F$ corresponds to set $S_{\ell}^{j}$ in $\CalS'$.

\begin{figure}
\begin{center}
\includegraphics[width=0.8\linewidth]{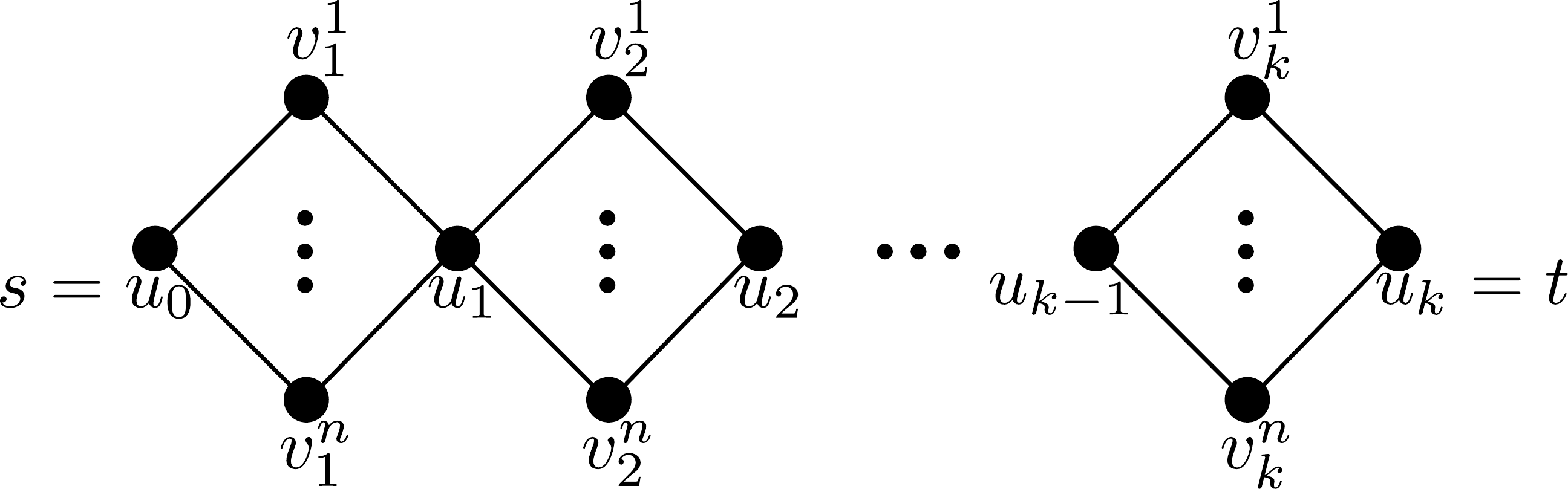}
\end{center}
\caption{\label{figure:OverlayHops}
The overlay graph $H$ used in the hardness proof of $\SPDDC_{G, \rho}(s, t, H)$.
}
\end{figure}

Assume first that $\SPDDC_{G, \rho}(s, t, H) \geq 1$ and let $\pi$ be an $(s,
t)$-path in $H$ such that its image under $\rho$ is a simple path in $G$.
Let $(u_0, v_{1}^{j_1}), \dots, (u_{k - 1}, v_{k}^{j_k})$ be the edges
traversed by $\pi$ on odd hops.
By the definition of $\CalS'$ and the design of $\rho$, it must be that
$j_{\ell} \neq j_{\ell'}$ for every $1 \leq \ell < \ell' \leq k$ as otherwise,
$\rho(\pi)$ does not form a simple path in $G$.
Lemma~\ref{result:EncodeSetSystem} then guarantees that the sets $S^{j_1},
\dots, S^{j_k}$ in $\CalS$ are pairwise disjoint.
In the converse direction, assume that the sets $S^{j_1}, \dots, S^{j_k}$ in
$\CalS$ are pairwise disjoint.
By the definition of $\CalS'$, the sets $S_{1}^{j_1}, \dots, S_{k}^{j_k}$ in
$\CalS'$ are also pairwise disjoint.
Lemma~\ref{result:EncodeSetSystem} then guarantees that the image under
$\rho$ of the path $\langle u_0, v_{1}^{j_1}, u_1, v_{2}^{j_2}, \dots, u_{k - 1},
v_{k}^{j_k}, u_k \rangle$ in $H$ is simple, hence $\SPDDC_{G, \rho}(s, t, H)
\geq 1$, which establishes Theorem~\ref{theorem:HardnessSPDDC}.
Theorem~\ref{theorem:HardnessAllPairsSPDDC} follows by observing that in the
aforementioned construction, $\SPDDC_{G, \rho}(x, y, H)$ is minimized by
taking $x = s$ and $y = t$.

\end{document}